\documentclass[12pt]{article}
\usepackage{epsfig}
\usepackage{amsfonts}
\usepackage{mathtools} 
\usepackage{amscd}
\usepackage{latexsym}
\usepackage{amsmath,amssymb,amsthm}
\usepackage{verbatim}
\usepackage{setspace}
\usepackage{cite}
\usepackage{graphicx}
\usepackage[makeroom]{cancel}
\usepackage{mathtools}
\usepackage[all]{xy}
\usepackage{tikz}
\usepackage[textheight=9in, textwidth=6.5in, letterpaper]{geometry}
\usepackage{color}   %May be necessary if you want to color links
\usepackage{hyperref}
\usepackage{tikz-cd}
\usetikzlibrary{cd,decorations.pathmorphing}
\usepackage{authblk}
\hypersetup{
    colorlinks=true,  %set true if you want colored links
    linktoc=all,     %set to all if you want both sections and subsections linked
    linkcolor=black,  %choose some color if you want links to stand out
    citecolor=black,
    filecolor=black,
    urlcolor=black,
}

\numberwithin{equation}{section}

\newtheorem{Theorem}{Theorem}[section]
\newtheorem*{Theorem*}{Theorem}
\newtheorem{Corollary}[Theorem]{Corollary}
\newtheorem{Lemma}[Theorem]{Lemma}
\newtheorem{Proposition}[Theorem]{Proposition}
 { \theoremstyle{definition}
\newtheorem{Definition}[Theorem]{Definition}

\newtheorem{Example}[Theorem]{Example}
\newtheorem{Remark}[Theorem]{Remark} }
\def\p{\partial}

\def\<{\langle}
\def\>{\rangle}

\def\cO{\mathcal{O}}

\def\vp{\varphi}
\def\S{\mathcal{S}}
\def\L{\mathcal{L}}
\def\R{\mathcal{R}}
\def\ih{\frac{i}{\hbar}}
\def\SO{\text{SO}}
\def\so{\mathfrak{so}}

\def\M{\mathbf{M}}
\def\m{\mathcal{M}}

\def\be{\begin{equation}}
\def\ee{\end{equation}}
\def\beq{\be\begin{array}{c}}
\def\eeq{\end{array}\ee}
\def\bes{\be\begin{split}}
\def\ees{\end{split} \ee}
\def\bs{\begin{split}}
\def\es{\end{split} }
\def\nn{\nonumber}
\def\b{{\beta}}
\def\a{{\alpha}}

\def\e{{\epsilon}}

   \makeatletter
  \let\over=\@@over \let\overwithdelims=\@@overwithdelims
  \let\atop=\@@atop \let\atopwithdelims=\@@atopwithdelims
  \let\above=\@@above \let\abovewithdelims=\@@abovewithdelims
\renewcommand\section{\@startsection {section}{1}{\z@}%
                                   {-3.5ex \@plus -1ex \@minus -.2ex}%nn
                                   {2.3ex \@plus.2ex}%
                                   {\normalfont\large\bfseries}}

\renewcommand\subsection{\@startsection{subsection}{2}{\z@}%
                                     {-3.25ex\@plus -1ex \@minus -.2ex}%
                                     {1.5ex \@plus .2ex}%
                                     {\normalfont\bfseries}}

\title{Classical BV Pushforward}
\author[1,2]{Xin Geng}
\author[2]{Andrey Losev}
\author[1,2]{Vyacheslav Lysov}
\affil[1]{\it Center for Mathematics and Interdisciplinary Sciences\\ Fudan University, Shanghai, 200433, China}
\affil[2]{\it Shanghai Institute for Mathematics and Interdisciplinary Sciences (SIMIS)\\ Shanghai, 200433, China}

\begin{document}
\begin{titlepage}
\unitlength = 1mm

\vskip 1cm

\maketitle

\begin{center}

\begin{abstract}
We define the classical BV pushforward and establish its two main properties. First, we prove that it maps solutions of the classical master equation (CME) on the total space to solutions of the CME on the smaller space. Second, we prove that, for isolated critical points, it maps BV canonically equivalent solutions to BV canonically equivalent solutions. We illustrate the construction with examples and discuss its applications.

\end{abstract}
\vspace{1.0cm}
\end{center}
\end{titlepage}

\pagestyle{empty}
\pagestyle{plain}
\pagenumbering{arabic}
\tableofcontents

\section{Introduction}

The Batalin-Vilkovisky (BV)\cite{Batalin:1981jr, Batalin:1983ggl} pushforward is a powerful construction in quantum field theory and homological algebra. It formalizes the idea of integrating out fields and performing homotopy transfer. The key property of the BV pushforward is that it preserves the Quantum Master Equation (QME): it maps a class of QME solutions, modulo BV canonical transformations, to a class of solutions on a smaller space. 

For many practical applications, we are interested in the classical limit of the QME, the Classical Master Equation (CME), and its solutions. We can take the classical limit of the BV pushforward procedure to get a map between the CME solutions. Since the BV pushforward is an integral, the classical limit becomes the saddle point approximation. Hence, the induced CME solution is an extremization of the CME solution on the big space along the Lagrangian submanifold. We name such extremization as ``classical BV pushforward '' and adopt it as a definition.

Our definition of classical BV pushforward and its key property, as a map between CME solutions, requires just an odd symplectic structure. In contrast, existing proofs of the key property require extending CME solutions on the big space to full QME solutions, proving the existence of the integral, and establishing the classical limit for the integration result. Each of the listed steps could be obstructed, and we provide several obstruction examples in section \ref{sect_obstructions}.

In this paper, we define the classical BV pushforward and establish its two main properties. First, we prove that it maps solutions of the CME on the total space to solutions of the CME on the smaller space. Second, we prove that, for isolated critical points, it maps BV canonically equivalent solutions to BV canonically equivalent solutions. As part of the proof, we provide an explicit relation between the generators of BV canonical transformations on the total space and the smaller space. We illustrate the construction with examples and discuss its applications.

The paper is organized as follows. In Section 2 we review the BV pushforward and the relevant aspects of the QME and CME. In Section 3 we introduce the classical BV pushforward and prove the two main theorems. In Section 4 we present examples. In Section 5 we conclude and discuss open questions.

\section{BV pushforward}
In this section, we briefly review the BV pushforward following lecture notes \cite{mnev2019quantum}.

\subsection{Fiber BV integral}
\begin{Definition}\label{def_Cart_prod_SP}
    For a pair of SP manifolds $({\bf M},\Omega,\mu)$ and $(\mathcal{M}, \omega, \hat{\mu})$ the Cartesian product $M={\bf M}\times\mathcal{M}$ is an SP manifold $(M, \Omega + \omega, \mu\hat{\mu})$. Let $P: M \to {\bf M}$ be a projection on the first factor. We denote $\Delta$ and ${\bf \Delta}$ the BV Laplacian operators on total space $M$ and ${\bf M}$.   
\end{Definition}

\begin{Definition}
    For a Lagrangian submanifold $\mathcal{L} \subset \mathcal{M}$ we define {\it the fiber BV integral}
    \be\label{def_fiber_bv_int}
        P_\ast^{(\mathcal{L})} = \int_{\mathcal{L}}\sqrt{\hat{\mu}}\;\;: C^\infty( M) \to C^\infty({\bf M}).
    \ee
\end{Definition}
\begin{Remark}
    The integrand $\sqrt{\mu\hat{\mu}}\; f$ in (\ref{def_fiber_bv_int}) is a half-density on $M$. Khudaverdian \cite{khudaverdian2000semidensities} (see \cite{cattaneo2022canonical} for a recent review) showed that the fiber BV integral could be defined canonically for half-densities on an odd symplectic manifold. Namely, we do not need a compatible Berezinian.
\end{Remark}
\begin{Theorem}\label{thm_bv_stokes} {\bf (Stokes theorem for fiber BV integrals)}
    For a Cartesian product of SP manifolds in Definition \ref{def_Cart_prod_SP}, the following holds:
    \begin{itemize} \item The fiber BV integral (\ref{def_fiber_bv_int}) commutes with the BV Laplacian. Namely, 
    \be
        P_\ast^{(\mathcal{L})} \Delta = {\bf \Delta} P_\ast^{(\mathcal{L})} .
    \ee
    \item For a pair of Lagrangian submanifolds $\mathcal{L}_0$ and $\mathcal{L}_1$, homotopic in $\mathcal{M}$
    \be
        P_\ast^{(\mathcal{L}_0)} -P_\ast^{(\mathcal{L}_1)} = {\bf \Delta}(\ldots).
    \ee
    \end{itemize}
\end{Theorem}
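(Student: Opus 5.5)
Both claims reduce to the Stokes theorem on the fiber $\mathcal{M}$, using the product structure of the BV Laplacian. I will assume compact support (or rapid decay) in the fiber directions, so that all integrals converge and boundary terms vanish.

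\textbf{Step 1: splitting the Laplacian.} On the Cartesian product $M={\bf M}\times\mathcal{M}$ with symplectic form $\Omega+\omega$ and Berezinian $\mu\hat\mu$, the BV Laplacian splits as
\be
\Delta = {\bf \Delta}\otimes 1 + 1\otimes \hat\Delta ,
\ee
where $\hat\Delta$ is the BV Laplacian of $(\mathcal{M},\omega,\hat\mu)$. To see this, I will use the local formula $\Delta f=\frac12\,\mathrm{div}_{\mu\hat\mu}X_f$. Both the Hamiltonian vector field and the divergence with respect to a product Berezinian decompose additively, with no cross terms, because $\Omega+\omega$ is block diagonal.

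\textbf{Step 2: first bullet.} The operator ${\bf \Delta}$ only differentiates along ${\bf M}$, so it commutes with integration over $\mathcal{L}\subset\mathcal{M}$. This gives $P_\ast^{(\mathcal{L})}({\bf \Delta} f)={\bf \Delta}P_\ast^{(\mathcal{L})}f$. It remains to prove
\be
\int_{\mathcal{L}}\sqrt{\hat\mu}\;\hat\Delta g=0
\ee
for every fiberwise function $g$, with ${\bf M}$-coordinates treated as parameters.

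This is the odd-symplectic Stokes lemma. Choose Darboux coordinates $(x^i,\xi_i)$ adapted to $\mathcal{L}$, so that $\mathcal{L}=\{\xi=0\}$. This is possible locally by the odd Darboux–Weinstein theorem, identifying a neighborhood of $\mathcal{L}$ with $\Pi T^*\mathcal{L}$. With a compatible Berezinian $\hat\mu=\rho^2\,dx\,d\xi$, the Laplacian is
\be
\hat\Delta g=\frac{\p^2 g}{\p x^i\p\xi_i}+\ldots ,
\ee
and the restriction of $\sqrt{\hat\mu}\,\hat\Delta g$ to $\mathcal{L}$ is a total divergence $\p_{x^i}(\rho\,\p_{\xi_i} g)|_{\xi=0}\,dx$. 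This is the Khudaverdian half-density formulation. Its integral vanishes by the ordinary (super) Stokes theorem. Globally, I will patch the pieces with a partition of unity. The cleanest route is to note that $\hat\Delta$ on half-densities is intrinsic, so the divergence statement holds chart-independently.

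\textbf{Step 3: second bullet.} Let $\mathcal{L}_t$ be a family of Lagrangians joining $\mathcal{L}_0$ to $\mathcal{L}_1$. Locally $\mathcal{L}_t$ is the graph of $d\Psi_t$ for an odd generating function (gauge fermion) $\Psi_t$ in the $\Pi T^*\mathcal{L}_0$ model. I will show
\be
\frac{d}{dt}P_\ast^{(\mathcal{L}_t)}f=P_\ast^{(\mathcal{L}_t)}\{\dot\Psi_t,f\}_\omega ,
\ee
and then use the identity
\be
\{\Psi,f\}_\omega=\pm\big(\hat\Delta(\Psi f)-\hat\Delta\Psi\, f\mp\Psi\,\hat\Delta f\big).
\ee
I will split $\hat\Delta=\Delta-{\bf \Delta}$. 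The $\hat\Delta(\cdot)$ term integrates to zero by Step 2. Thus, for $f$ satisfying $\Delta f=0$ (the relevant setting, e.g.\ $f=e^{iS/\hbar}$ with $S$ solving the QME), the derivative becomes
\be
\pm{\bf \Delta}\,P_\ast^{(\mathcal{L}_t)}(\dot\Psi_t f).
\ee
The $\hat\Delta\Psi$ term can be removed by choosing $\Psi_t$ with $\hat\Delta\Psi_t=0$ in the adapted chart; alternatively it is absorbed by working with half-densities. Integrating in $t$ from $0$ to $1$ gives the exact term ${\bf \Delta}(\ldots)$.

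\textbf{Main obstacle.} The hardest part is globalizing Step 3: a homotopy of Lagrangians is not globally a graph over $\mathcal{L}_0$. I will handle this by subdividing $[0,1]$ so that each short segment stays in a tubular neighborhood of the previous Lagrangian, and summing the resulting exact pieces. The reliance on $\Delta f=0$ (closedness) for the second bullet should be stated explicitly, since the formula as written needs it.
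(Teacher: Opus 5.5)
Your proposal is correct. It takes the same route as the references the paper relies on, since the paper gives no proof of its own: it cites Krotov, Mnev's notes and Cattaneo et al., and mentions Witten's odd Fourier transform as an alternative.

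Your argument is the standard direct one from those references. For the first bullet, you split $\Delta={\bf \Delta}+\hat\Delta$ for the product Berezinian and kill $\int_{\mathcal{L}}\sqrt{\hat\mu}\,\hat\Delta(\cdot)$ by the odd-symplectic Stokes lemma. For the second, you differentiate along a gauge-fermion family and rewrite $\{\dot\Psi_t,f\}$ through $\hat\Delta(\dot\Psi_t f)$ and $\dot\Psi_t\hat\Delta f$. Using $\hat\Delta f=-{\bf \Delta}f$ when $\Delta f=0$ then gives $P_\ast^{(\mathcal{L}_1)}f-P_\ast^{(\mathcal{L}_0)}f={\bf \Delta}\big(\pm\int_0^1 P_\ast^{(\mathcal{L}_t)}(\dot\Psi_t f)\,dt\big)$.

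Witten's route trades generality differently. It needs the global model $\mathcal{M}\cong\Pi T^*N$ and turns the statement into fiberwise Stokes for forms, with $\hat\Delta\leftrightarrow d$ and conormal-type Lagrangians $\leftrightarrow$ cycles; that buys invariance under homology. Your route works for any fiber with compatible Berezinian and uses only a neighbourhood of $\mathcal{L}$, but covers only Lagrangian homotopies. You are also right that the second bullet needs $\Delta f=0$. The paper's formulation omits this, and it only applies the result to $f=e^{\frac{i}{\hbar}\mathcal{S}}$ with $\mathcal{S}$ solving the QME.

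Three places need tightening.

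First, your displayed divergence drops a term and hides where compatibility enters. The general identity is $\rho\,\hat\Delta g=\Delta_0(\rho g)-(\Delta_0\rho)\,g$, with $\Delta_0$ the coordinate Laplacian. Compatibility of $\hat\mu$ means exactly $\Delta_0\rho=0$, so the restricted integrand is $\pm\partial_{x^i}\big(\partial_{\xi_i}(\rho g)|_{\xi=0}\big)\,dx$, not $\partial_{x^i}(\rho\,\partial_{\xi_i}g)|_{\xi=0}\,dx$. Without compatibility, the term $-(\Delta_0\rho)g$ survives and does not integrate to zero.

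Second, $\Psi_t$ is fixed by $\mathcal{L}_t$ up to a constant, so you cannot choose it to satisfy $\hat\Delta\Psi_t=0$. Your half-density alternative is the right fix:
\begin{itemize}
\item Differentiating $\int dx\,(\rho f)(x,\partial_x\Psi_t)$ produces $\{\dot\Psi_t,\rho f\}$.
\item The BV identity for $\Delta_0$ has no $\Delta_0\dot\Psi_t$ term, because $\dot\Psi_t$ depends only on $x$.
\item Compatibility gives $\Delta_0(\rho g)=\rho\,\hat\Delta g$ for $g=\dot\Psi_t f$ and $g=f$.
\end{itemize}
Equivalently, the term $\{\dot\Psi_t,\log\rho\}f$ that your formula for $\frac{d}{dt}P_\ast^{(\mathcal{L}_t)}f$ omits cancels against $-(\hat\Delta\dot\Psi_t)f$.

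Third, in each short segment of your subdivision you need a gauge fermion defined on all of $\mathcal{L}_s$, not just locally. A nearby Lagrangian is a priori the graph of a \emph{closed} section of $\Pi T^*\mathcal{L}_s$. It is exact for the following reasons:
\begin{itemize}
\item The associated closed 1-form $\alpha_A\,dy^A$ is even.
\item It restricts to zero on the body, since its components along even directions are odd functions.
\item The de Rham cohomology of a supermanifold is that of its body, so the form equals $d\Psi$, and $\Psi$ can be taken odd.
\end{itemize}

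With these additions, your sketch is a complete proof under your support assumptions.
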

One of the early proofs of the Theorem \ref{thm_bv_stokes} appeared \cite{Krotov:2006th}, while the modern ones can be found in \cite{mnev2019quantum, cattaneo2026bv}. Witten \cite{Witten:1990wb} used the odd Fourier transform to map the theorem \ref{thm_bv_stokes} into a similar theorem for the pushforward of differential forms (see \cite{qiu2011introduction} for a recent review).

\subsection{Quantum master equation}

We introduce the exponential function with parameter $\hbar \in \mathbb{R}^+$
\be\label{form_exp_param_bv_action}
    f = e^{\frac{i}{\hbar} \mathcal{S}}.
\ee
\begin{Remark}
    In the exponential parametrization, authors sometimes use $-1$ instead of $i$. However, for the later parts of our discussion, we need an $i$ factor so the saddle-point approximation for the integral of $f$ is a sum over all critical points rather than a single critical point with the minimal value of $\mathcal{S}$.
\end{Remark}
The BV-closeness condition $\Delta f = 0$ becomes  
\be
    \Delta e^{\frac{i}{\hbar}\mathcal{S}} =-\frac{1}{\hbar^2} e^{\frac{i}{\hbar}\mathcal{S}}\left( \frac12 \{\mathcal{S},\mathcal{S}\}-i\hbar \Delta \mathcal{S} \right) = 0.
\ee

\begin{Definition}
    For an odd symplectic space with compatible Berezinian $(\mathcal{M}, \omega, \mu)$, a formal series of even functions $\mathcal{S}\in C^\infty(\mathcal{M})[[-i\hbar]]$, known as the quantum (or effective) BV action, is a solution to the quantum master equation if
    \be
    \label{quantum_mast_eq}
    \frac12 \{\mathcal{S}, \mathcal{S}\} =i  \hbar \Delta \mathcal{S}. 
    \ee
\end{Definition}

\begin{Corollary}
    In terms of the expansion coefficients 
    \be\label{BV_action_hbar_expansion}
        \mathcal{S} =\mathcal{S}_0+(-i\hbar) \mathcal{S}_1+(-i\hbar)^2 \mathcal{S}_2+\ldots
    \ee
    the quantum master equation (\ref{quantum_mast_eq}) becomes a system of equations 
    \be\label{eq_qme_hbar_expansion}
    \begin{split}
        \{\mathcal{S}_0, \mathcal{S}_0\}& = 0, \;\; \{\mathcal{S}_0, \mathcal{S}_1\} = -\Delta \mathcal{S}_0, \\
        \{\mathcal{S}_0, \mathcal{S}_2\} &+ \frac12  \{\mathcal{S}_1, \mathcal{S}_1\}=-\Delta \mathcal{S}_1,\ldots
    \end{split}
    \ee 
\end{Corollary}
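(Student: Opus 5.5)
The plan is to substitute the expansion (\ref{BV_action_hbar_expansion}) into the quantum master equation (\ref{quantum_mast_eq}) and collect powers of $\hbar$. Since we are working in $C^\infty(\mathcal{M})[[-i\hbar]]$, two formal series agree exactly when all their coefficients agree. It is therefore convenient to use $t=-i\hbar$ as the expansion variable, so that $i\hbar = -t$ and the equation reads
\be
\frac12\{\mathcal{S},\mathcal{S}\} = -t\,\Delta\mathcal{S}.
\ee

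The first step is the left-hand side. The BV bracket is bilinear, and since every $\mathcal{S}_k$ is even it is symmetric on them: $\{\mathcal{S}_j,\mathcal{S}_k\}=\{\mathcal{S}_k,\mathcal{S}_j\}$. This gives
\be
\frac12\{\mathcal{S},\mathcal{S}\}=\sum_{n\ge 0} t^n \,\frac12\sum_{j+k=n}\{\mathcal{S}_j,\mathcal{S}_k\}.
\ee
Pairing the terms $(j,k)$ and $(k,j)$, the coefficient of $t^n$ equals $\sum_{j<k,\,j+k=n}\{\mathcal{S}_j,\mathcal{S}_k\}$, plus $\frac12\{\mathcal{S}_{n/2},\mathcal{S}_{n/2}\}$ when $n$ is even.

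The second step is the right-hand side. By linearity of $\Delta$,
\be
-t\,\Delta\mathcal{S}=-\sum_{n\ge1} t^n\,\Delta\mathcal{S}_{n-1}.
\ee

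The third step is to compare coefficients order by order:
\begin{itemize}
\item At $t^0$ we get $\frac12\{\mathcal{S}_0,\mathcal{S}_0\}=0$, which is the same as $\{\mathcal{S}_0,\mathcal{S}_0\}=0$.
\item At $t^1$ we get $\{\mathcal{S}_0,\mathcal{S}_1\}=-\Delta\mathcal{S}_0$.
\item At $t^2$ we get $\{\mathcal{S}_0,\mathcal{S}_2\}+\frac12\{\mathcal{S}_1,\mathcal{S}_1\}=-\Delta\mathcal{S}_1$.
\end{itemize}
These are exactly the equations in (\ref{eq_qme_hbar_expansion}). The ellipsis stands for the general order-$n$ relation $\frac12\sum_{j+k=n}\{\mathcal{S}_j,\mathcal{S}_k\}=-\Delta\mathcal{S}_{n-1}$.

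There is no real obstacle here. The only point needing care is sign bookkeeping: converting $i\hbar$ into $-t$ produces the minus signs, and the symmetry of the bracket on even elements removes the factor $\frac12$ from the cross terms. If one prefers to expand in $\hbar$ instead, each side picks up a factor $(-i)^n$ at order $n$, and these factors cancel between the two sides.
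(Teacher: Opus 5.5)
Your proposal is correct and is the same argument the paper relies on: the paper states this corollary without a written proof, and the intended derivation is to substitute the expansion into the QME and compare coefficients of $-i\hbar$. The two details you check are both right: the bracket is symmetric on even functions, which turns the cross terms into $\{\mathcal{S}_0,\mathcal{S}_k\}$, and writing $i\hbar=-t$ produces the minus signs on the right-hand side.
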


Given a QME solution $\mathcal{S}$ and an odd function $\mathcal{R}$ we can construct another QME solution $\mathcal{S}'$ by adding $\Delta$-exact term
\be
    e^{\ih \S'}=e^{\ih \S}+\Delta\left(\int_0^1 e^{\ih\S_t}\R_t dt\right).
\ee

\begin{Definition} \label{def_can_bv_transform} 
    Two solutions $\mathcal{S},\mathcal{S}'$ for the quantum master equation are equivalent $\mathcal{S}\sim \mathcal{S}'$, if there exists a canonical BV transformation: a family  $\mathcal{S}_t, \mathcal{R}_t \in C^\infty(\mathcal{M})[[-i\hbar]]$, parametrized by $t\in [0,1]$, such that $\mathcal{S}_0=\mathcal{S}$  and $\mathcal{S}_1=\mathcal{S}'$, and the following equation holds
    \be\label{def_BV_canonical_diff}
        \frac{d}{dt} \mathcal{S}_t=\{\mathcal{S}_t, \mathcal{R}_t\} -i\hbar\Delta \mathcal{R}_t.
    \ee
    The odd function $\mathcal{R}_t$ is called the generator of the canonical BV transformation.
\end{Definition}

\subsection{BV pushforward theorem}

\begin{Definition}
    For a QME solution $\mathcal{S}$ on $\bf{M}\times \mathcal{M}$ and a Lagrangian submanifold $\mathcal{L}\subset \mathcal{M}$, the induced BV action ${\bf S}$ on ${\bf M}$ is defined through the BV integral (\ref{def_fiber_bv_int}) of the exponential function. Namely,
    \be\label{eq_BV_induction}
       (-2\pi i \hbar)^{s(\mathcal{L})} e^{\frac{i}{\hbar} {\bf S} (\hbar)}= \int_{\mathcal{L}} \sqrt{\mu}\; e^{\frac{i}{\hbar} \mathcal{S}} .
    \ee
    We choose an integer $s(\mathcal{L})$ such that the induced action ${\bf S}(\hbar)$ is in $C^\infty({\bf M})[[-i\hbar]]$.
\end{Definition}

\begin{Remark}
    For perturbative evaluation of the BV integral (\ref{eq_BV_induction}), the induced action is in $C^\infty({\bf M})[[-i\hbar]]$ only if we have a single critical point. 
\end{Remark}

The Theorem \ref{thm_bv_stokes} for exponential functions implies that 
\begin{Theorem}\label{thm_BV_action_induction} 
    If $\mathcal{S}$ solves the QME on ${\bf M} \times \mathcal{M}$ then
    \begin{itemize}
        \item the induced BV action ${\bf S}$ solves the QME on $\bf{M}$.
        \item for a pair of homotopic Lagrangian submanifolds $\mathcal{L}$ and $\mathcal{L}'$ the corresponding induced actions ${\bf S}$ and ${\bf S}'$ are related by a BV canonical transformation.
    \end{itemize}  
\end{Theorem}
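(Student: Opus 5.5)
The argument uses the fact that the QME for $\mathcal{S}$ is equivalent to $\Delta e^{\frac{i}{\hbar}\mathcal{S}}=0$, as computed above, and then applies Theorem \ref{thm_bv_stokes} to the half-density $\sqrt{\mu\hat\mu}\,e^{\frac{i}{\hbar}\mathcal{S}}$. For the first item, the first bullet of Theorem \ref{thm_bv_stokes} gives
\be
{\bf \Delta}\, P_\ast^{(\mathcal{L})} e^{\frac{i}{\hbar}\mathcal{S}} = P_\ast^{(\mathcal{L})}\Delta e^{\frac{i}{\hbar}\mathcal{S}} = 0 .
\ee
By (\ref{eq_BV_induction}), the left-hand side equals $(-2\pi i\hbar)^{s(\mathcal{L})}\,{\bf \Delta}\, e^{\frac{i}{\hbar}{\bf S}}$. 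The prefactor is a constant and ${\bf \Delta}$ is linear, so the prefactor drops out and we get ${\bf \Delta} e^{\frac{i}{\hbar}{\bf S}}=0$. The same identity, now on ${\bf M}$, turns this back into $\frac12\{{\bf S},{\bf S}\}=i\hbar{\bf \Delta}{\bf S}$. The formal $\hbar$-expansion is compatible at each step because ${\bf S}\in C^\infty({\bf M})[[-i\hbar]]$ by the choice of $s(\mathcal{L})$.

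For the second item, I would not just use the endpoint statement of the second bullet. Instead I would work with a smooth family of Lagrangians $\mathcal{L}_t$, $t\in[0,1]$, interpolating between $\mathcal{L}$ and $\mathcal{L}'$, and set
\be
(-2\pi i\hbar)^{s}e^{\frac{i}{\hbar}{\bf S}_t}=\int_{\mathcal{L}_t}\sqrt{\hat\mu}\,e^{\frac{i}{\hbar}\mathcal{S}} .
\ee
Here $s(\mathcal{L}_t)$ is constant along the homotopy, since it is determined by the (constant) number of integrated-out directions. Applying the second bullet to an infinitesimal homotopy should give the differentiated identity
\be
\frac{d}{dt}e^{\frac{i}{\hbar}{\bf S}_t}={\bf \Delta}\,\Psi_t .
\ee
Concretely, I would deform $\mathcal{L}_t$ locally as the graph of $d\psi_t$ for an odd generating function $\psi_t$ on $\mathcal{L}_t$. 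The variation of the integral is then $P_\ast^{(\mathcal{L}_t)}\{\psi_t, e^{\frac{i}{\hbar}\mathcal{S}}\}$, and since $\Delta e^{\frac{i}{\hbar}\mathcal{S}}=0$ this equals $\pm P_\ast^{(\mathcal{L}_t)}\Delta(\psi_t e^{\frac{i}{\hbar}\mathcal{S}})={\bf \Delta}P_\ast^{(\mathcal{L}_t)}(\psi_t e^{\frac{i}{\hbar}\mathcal{S}})$, by the first bullet.

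The remaining task is to write $\Psi_t$ in the form $\frac{i}{\hbar}\,e^{\frac{i}{\hbar}{\bf S}_t}\,\R_t$, where $\R_t$ is an odd function on ${\bf M}$; up to the constant prefactor, this amounts to defining
\be
\R_t=-i\hbar\, e^{-\frac{i}{\hbar}{\bf S}_t}\,P_\ast^{(\mathcal{L}_t)}\!\left(\psi_t e^{\frac{i}{\hbar}\mathcal{S}}\right),
\ee
which has the form of a normalized expectation value. Expanding ${\bf \Delta}(e^{\frac{i}{\hbar}{\bf S}_t}\R_t)$ with the BV Leibniz rule and using the QME for ${\bf S}_t$ gives
\be
{\bf \Delta}\!\left(e^{\frac{i}{\hbar}{\bf S}_t}\R_t\right)=e^{\frac{i}{\hbar}{\bf S}_t}\left(\frac{i}{\hbar}\{{\bf S}_t,\R_t\}+{\bf \Delta}\R_t\right)
\ee
(up to sign conventions). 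Dividing by $\frac{i}{\hbar}e^{\frac{i}{\hbar}{\bf S}_t}$ yields exactly (\ref{def_BV_canonical_diff}), namely $\frac{d}{dt}{\bf S}_t=\{{\bf S}_t,\R_t\}-i\hbar{\bf \Delta}\R_t$.

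I expect the main obstacle to be checking that $\R_t$ lies in $C^\infty({\bf M})[[-i\hbar]]$ and does not start at negative powers of $\hbar$. This is the point where the single-critical-point assumption from the preceding Remark is needed. I would handle it by saddle-point expansion: the leading term of $\R_t$ is $\psi_t$ evaluated at the critical point, which is finite, and all higher corrections come with nonnegative powers of $-i\hbar$. Globalizing the local graph description of $\mathcal{L}_t$ should cause no difficulty at the formal level.
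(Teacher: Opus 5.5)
Your argument is correct, and at its core it is the paper's: the paper proves this theorem in one line by applying Theorem \ref{thm_bv_stokes} to $e^{\ih\mathcal{S}}$. Your first item is exactly that argument.

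For the second item you go further than the paper. The paper uses only the endpoint statement, that $e^{\ih{\bf S}}-e^{\ih{\bf S}'}$ is ${\bf \Delta}$-exact. It leaves implicit, via the remark before Definition \ref{def_can_bv_transform}, that this gives a canonical transformation. But Definition \ref{def_can_bv_transform} requires a path $({\bf S}_t,\mathcal{R}_t)$, and endpoint exactness does not by itself supply one. For instance, linearly interpolating the exponentials need not stay of the form $e^{\ih{\bf S}_t}$. By differentiating along $\mathcal{L}_t$ you build that path and an explicit generator, the normalized expectation value of $\psi_t$.

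The cost is that you need the local graph model for $\mathcal{L}_t$ and a single critical point for every $t$, not just at the endpoints. The benefit is that the classical limit of your generator is $\pm\psi_t|_{\vp_{cl}}$. This matches the paper's Theorem \ref{thm_BV_induction_canonical_transform}, ${\bf R}=\mathcal{R}|_{\mathcal{L},\vp_{cl}}$, because moving $\mathcal{L}$ along the graph of $d\psi$ is the same as transforming $\mathcal{S}$ canonically with generator $\pm\psi$.

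There is a normalization slip to fix. Since $\frac{d}{dt}e^{\ih{\bf S}_t}=\ih e^{\ih{\bf S}_t}\frac{d}{dt}{\bf S}_t$, your choice $\Psi_t=\ih e^{\ih{\bf S}_t}\mathcal{R}_t$ combined with your Leibniz computation gives $\frac{d}{dt}{\bf S}_t=\ih\big(\{{\bf S}_t,\mathcal{R}_t\}-i\hbar{\bf \Delta}\mathcal{R}_t\big)$, not (\ref{def_BV_canonical_diff}). The consistent choice is $\Psi_t=e^{\ih{\bf S}_t}\mathcal{R}_t$, as in the formula preceding Definition \ref{def_can_bv_transform}. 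That means $\mathcal{R}_t=\pm(-2\pi i\hbar)^{-s}e^{-\ih{\bf S}_t}P_\ast^{(\mathcal{L}_t)}(\psi_t e^{\ih\mathcal{S}})$, with no extra factor $-i\hbar$. Only with this normalization is your closing claim true that $\mathcal{R}_t=\pm\psi_t|_{\vp_{cl}}+\cO(\hbar)$; your formula as written starts at $-i\hbar\,\psi_t|_{\vp_{cl}}$.

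Also, the step from $P_\ast\{\psi_t,e^{\ih\mathcal{S}}\}$ to $\pm P_\ast\Delta(\psi_t e^{\ih\mathcal{S}})$ needs $\Delta\psi_t=0$ as well as $\Delta e^{\ih\mathcal{S}}=0$. This holds when $\psi_t$ depends only on the base coordinates and the Berezinian is the coordinate one in the tubular chart. In general, the Lie derivative of the half-density contributes exactly the missing $(\Delta\psi_t)e^{\ih\mathcal{S}}$ term. So the identity you actually use, $\frac{d}{dt}P_\ast^{(\mathcal{L}_t)}e^{\ih\mathcal{S}}=\pm P_\ast^{(\mathcal{L}_t)}\Delta(\psi_t e^{\ih\mathcal{S}})$, remains valid.
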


\subsection{Classical master equation}
\begin{Definition} 
    For an odd symplectic space $(\mathcal{M}, \omega)$, an even function $\mathcal{S} \in C^\infty(\mathcal{M})$, known as the classical BV action, is a solution to the classical master equation (CME) if 
    \be\label{cl_mast_eq}
        \{\mathcal{S}, \mathcal{S}\} = 0.
    \ee
\end{Definition}
\begin{Definition} \label{def_cl_can_bv_transform} 
    Two solutions $\mathcal{S},\mathcal{S}'$ for the CME are equivalent $\mathcal{S}\sim \mathcal{S}'$, if there exists a canonical BV transformation: a family  $\mathcal{S}_t, \mathcal{R}_t \in C^\infty(\mathcal{M})$, parametrized by $t\in [0,1]$, such that $\mathcal{S}_0=\mathcal{S}$  and $\mathcal{S}_1=\mathcal{S}'$, and the following equation holds
\be\label{def_cl_BV_canonical_diff}
    \frac{d}{dt} \mathcal{S}_t=\{\mathcal{S}_t, \mathcal{R}_t\}.
\ee
\end{Definition}

\begin{Lemma}\label{lemma_qme_to_cme}
    The $\mathcal{S}_0$-term of the QME solution $\mathcal{S} = \mathcal{S}_0+\hbar \mathcal{S}_1+\cO(\hbar)^2$ solves the CME. 
\end{Lemma}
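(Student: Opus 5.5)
The plan is to expand the quantum master equation (\ref{quantum_mast_eq}) as a formal power series in $\hbar$ and read off its lowest-order coefficient. This is essentially the content of the Corollary following (\ref{quantum_mast_eq}), and I will use it directly. The one point of care is the normalization: the lemma writes $\mathcal{S} = \mathcal{S}_0+\hbar \mathcal{S}_1+\cO(\hbar)^2$, while (\ref{BV_action_hbar_expansion}) expands in powers of $(-i\hbar)$. These two conventions differ only by rescaling each coefficient, $\mathcal{S}_k \mapsto (-i)^k\mathcal{S}_k$. Under that rescaling the $\hbar^0$ coefficient $\mathcal{S}_0$ is literally the same function, so the identification is harmless.

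First, substitute the expansion into both sides of (\ref{quantum_mast_eq}). On the left, the odd Poisson bracket is bilinear over the coefficient ring $\mathbb{C}[[\hbar]]$, since $\hbar$ is an even constant. This gives
\be
\tfrac12\{\mathcal{S},\mathcal{S}\} = \tfrac12\{\mathcal{S}_0,\mathcal{S}_0\} + \hbar\,\{\mathcal{S}_0,\mathcal{S}_1\} + \cO(\hbar^2).
\ee
The cross-term coefficient uses the symmetry $\{\mathcal{S}_0,\mathcal{S}_1\}=\{\mathcal{S}_1,\mathcal{S}_0\}$ for even functions. On the right, $i\hbar\Delta\mathcal{S} = i\hbar\,\Delta\mathcal{S}_0 + \cO(\hbar^2)$, which has no $\hbar^0$ term. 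Comparing coefficients of $\hbar^0$ in the ring of formal series gives $\{\mathcal{S}_0,\mathcal{S}_0\}=0$. This is precisely the classical master equation (\ref{cl_mast_eq}) for the even function $\mathcal{S}_0$.

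There is no real obstacle in this argument. The only things to verify are that $\Delta$ and $\{\cdot,\cdot\}$ extend $\hbar$-linearly to $C^\infty(\mathcal{M})[[-i\hbar]]$, and that equality of formal series means equality of each coefficient. Both follow from the definitions, since the QME is posed as an identity of formal series in $\hbar$. I would also remark that the argument needs only the leading coefficient and never uses the compatible Berezinian. This is consistent with the later emphasis that the CME requires only the odd symplectic structure.
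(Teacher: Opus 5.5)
Your proposal is correct and matches the paper's proof, which simply notes that the CME is the leading-order ($\hbar^0$) equation in the expansion (\ref{eq_qme_hbar_expansion}) of the QME; you carry out exactly this coefficient comparison. Your remark on the $\hbar$ versus $(-i\hbar)$ normalization is a harmless clarification, since the rescaling leaves the leading coefficient $\mathcal{S}_0$ unchanged.
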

\begin{proof}
    The classical master equation is the leading-order QME (\ref{eq_qme_hbar_expansion}).
\end{proof}
\begin{Remark}
    The construction of the higher-order terms $\mathcal{S}_{k}$ from $\mathcal{S}_0$ is called the BV-quantization. For further discussion of it, please see \cite{cattaneo2023bv}.
\end{Remark}

\subsection{Saddle point approximation for BV integral}
By the Theorem \ref{thm_BV_action_induction}, the induced action ${\bf S}(\hbar)$ solves the QME, while by the Lemma \ref{lemma_qme_to_cme}, its leading $\hbar \to 0$ order ${\bf S}_0$ solves the CME. The leading $\hbar\to 0$ order of the BV pushforward integral (\ref{eq_BV_induction}) is a sum over critical points of the exponents of the corresponding critical values of the BV-action. We formalize it below

\begin{Proposition}\label{prop_saddle_pt_bv_integral}
    The induced classical BV-action ${\bf S}_0$ is the extremization of the classical BV action $\mathcal{S}_0$ on $\bf{M}\times \mathcal{M}$ along the Lagrangian submanifold $\mathcal{L}\subset \mathcal{M}$.
\end{Proposition}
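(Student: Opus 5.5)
We plan a stationary phase (saddle point) analysis of the defining integral (\ref{eq_BV_induction}), done formally in $\hbar$. We write $\mathcal{S}=\mathcal{S}_0+(-i\hbar)\mathcal{S}_1+\ldots$ as in (\ref{BV_action_hbar_expansion}), and pick local Darboux-type coordinates $(x,\xi)$ on $\mathcal{M}$ in which $\mathcal{L}=\{\xi=0\}$. Points of ${\bf M}$ we denote ${\bf m}$. The integrand becomes $\sqrt{\mu}\,e^{\frac{i}{\hbar}\mathcal{S}_0({\bf m},x,0)}\,e^{\mathcal{S}_1({\bf m},x,0)+\cO(\hbar)}$. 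For fixed ${\bf m}$ the phase is $\mathcal{S}_0|_{\mathcal{L}}$. The parameter ${\bf m}$ includes odd coordinates, so it enters only as a formal (nilpotent-extended) parameter.

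\textbf{Step 1: critical points.} We look at the body of $\mathcal{S}_0|_{\mathcal{L}}$, a function of the even coordinates $x$. We take its nondegenerate critical points $x_c$ (the situation in the Remark after (\ref{eq_BV_induction})). By the implicit function theorem, each $x_c$ extends uniquely, as a formal series in the odd and nilpotent directions of ${\bf M}$ and in the odd coordinates of $\mathcal{L}$, to a solution $x_c({\bf m})$ of $\partial_x \mathcal{S}_0({\bf m},x,0)=0$. Nondegeneracy of the super-Hessian, i.e. an invertible even–even block, lets us also solve for the odd coordinates of $\mathcal{L}$ order by order. This defines the extremization ${\rm ext}_{\mathcal{L}}\mathcal{S}_0({\bf m})=\mathcal{S}_0({\bf m},x_c({\bf m}),0)$.

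\textbf{Step 2: stationary phase.} For fixed ${\bf m}$ we expand $\mathcal{S}_0$ around $x_c({\bf m})$. The quadratic part gives a Gaussian (Berezin for the odd variables) integral. It produces a factor $(-2\pi i\hbar)^{s(\mathcal{L})}$ times a Berezinian of the Hessian, which is $\hbar$-independent. Higher-order vertices give only positive powers of $\hbar$ by the usual rescaling $x-x_c\sim\sqrt{\hbar}$. Hence
\be
\int_{\mathcal{L}}\sqrt{\mu}\,e^{\frac{i}{\hbar}\mathcal{S}}=(-2\pi i\hbar)^{s(\mathcal{L})}\,e^{\frac{i}{\hbar}\mathcal{S}_0({\bf m},x_c({\bf m}),0)}\big(C({\bf m})+\cO(\hbar)\big).
\ee
Taking the logarithm gives ${\bf S}=\mathcal{S}_0(x_c)+(-i\hbar)(\ldots)$. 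So ${\bf S}_0$ is the critical value, which is the claim. With several critical points the result is a sum of such exponentials, and ${\bf S}_0$ is understood critical point by critical point. The $i$ in the exponent ensures that every critical point contributes, as in the earlier Remark.

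\textbf{Main obstacle.} The delicate step is making the stationary phase legitimate in the super setting. The phase depends on the odd coordinates of ${\bf M}$ and of $\mathcal{L}$. The critical point must therefore be defined as a formal solution, and the Gaussian integral over the odd directions must be checked to supply no negative powers of $\hbar$ that would shift ${\bf S}_0$. We would handle this by treating the Taylor expansion in odd variables as part of the perturbative vertices. Then the prefactor is a superdeterminant depending only on the Hessian of the body, and all odd corrections appear at order $\hbar^{\geq 0}$ relative to the leading exponential. In this way we would show that they modify only ${\bf S}_{k\ge1}$.
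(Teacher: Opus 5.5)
Your proposal is correct and takes the same route as the paper: the paper simply applies the saddle-point approximation to the fiber integral over $\mathcal{L}$ and reads ${\bf S}_0$ off as the critical value $\mathcal{S}_0|_{\mathcal{L},\varphi_{cl}}$, while you additionally spell out the nilpotent-extended critical point and the Gaussian/Berezin step that the paper leaves implicit. One slip to fix: a nondegenerate super-Hessian needs both the even--even and the odd--odd blocks to have invertible bodies, not only the even--even block, and it is the odd--odd block that lets you solve for the odd coordinates of $\mathcal{L}$ and makes $C(\mathbf{m})$ invertible, so that its logarithm is defined.
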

\begin{proof}
    We use the saddle point approximation to the BV integral
    \be\label{eq_sadd_pt_approx_bv_int}
        \int_{\mathcal{L}} \sqrt{\mu} e^{\frac i\hbar \mathcal{S}} = \int D\vp\; e^{\frac i\hbar \mathcal{S}_{\mathcal{L}}}=(2\pi i \hbar)^{s(\mathcal{L})} \sum_{\vp_{cl}} e^{\frac i\hbar \mathcal{S}_{\mathcal{L}}|_{\vp_{cl}} + \cO(\hbar^0)}. 
    \ee
    The sum is taken over the critical points (classical solutions) $\vp_{cl}$ for the classical BV-action restricted to the Lagrangian submanifold $\mathcal{S}_{\mathcal{L}} = \mathcal{S}_0|_{\mathcal{L}}$. We read off the induced classical BV-action from the exponential contribution to the saddle-point approximation. Namely,
    \be
        {\bf S}_0 = \mathcal{S}_{0}\Big|_{\mathcal{L},\vp_{cl}}.     
    \ee
\end{proof}

\subsection{Obstructions and peculiarities}\label{sect_obstructions}

Let us list the possible obstructions to using the BV pushforward to construct the classical induced action
\begin{enumerate}
    \item In the infinite-dimensional setting, the BV Laplacian $\Delta$ may be ill-defined, as in the example below. For more detailed discussion of this issue, see \cite{costello2015quantization}.

    \item Not every CME solution $\mathcal{S}_0$ could be extended to the QME solution. The leading quantum correction $\mathcal{S}_1$ is a cohomological problem
    \be
    D_{\mathcal{S}_0} \mathcal{S}_1 = -\Delta\mathcal{S}_0,\;\;\; D_{\mathcal{S}_0} = \{\mathcal{S}_0, \cdot\}. 
    \ee
 
    \item The BV pushforward is an integral, which may be ill-defined. Either we have a functional integral over an infinite-dimensional space, or the integral diverges.
    \item A perturbative evaluation (\ref{eq_sadd_pt_approx_bv_int}) for the BV pushforward integral may contain multiple saddle points. Hence, in the strict sense, the induced action is not a perturbative series in $i\hbar$.   
\end{enumerate}
\begin{Example}
    We consider a 1-dimensional mechanical system with periodic time $t\in S^1 = \mathbb{R}/\mathbb{Z}$ with a global translational symmetry $\delta \phi = \dot{\phi}$. The CME solution is 
    \be\label{eq_class_mech_translations}
        \mathcal{S} = \int_{S^1} dt\; \mathcal{L}(\phi, \dot{\phi}) + c\int_{S^1} dt\; \phi^\ast \dot{\phi}.  
    \ee
    We use a Fourier mode basis 
    \be
        \phi(t) = \sum_{n\in \mathbb{Z}}  e^{2\pi int} \phi_n,\;\; \phi^\ast(t) = \sum_{n\in \mathbb{Z}} e^{2\pi int} \phi^\ast_n,
    \ee 
    to rewrite the last term in (\ref{eq_class_mech_translations}) and define the BV Laplacian
    \be\label{eq_bv_lapl_mode_exp}
        \Delta = \sum_{n\in \mathbb{Z}} \frac{\p}{\p \phi_n} \frac{\p}{\p \phi^\ast_{-n}}.      
    \ee
    The BV Laplacian (\ref{eq_bv_lapl_mode_exp}) is ill-defined on the CME solution (\ref{eq_class_mech_translations}). Namely,
    \be
        \Delta\mathcal{S}  = c \sum_{n\in \mathbb{Z}} 2\pi in.  
    \ee
    
\end{Example}
\begin{Example}
    Let $\mathcal{M} = \Pi T^\ast\mathbb{C}^{1|1}$ with base coordinates $\phi, c$. We wish to quantize the CME solution 
    \be
        \mathcal{S}_0 = c \phi \phi^\ast.
    \ee
    The $\cO(\hbar)$ order of the QME gives us an equation for $\mathcal{S}_1$
    \be
        \{\mathcal{S}_0,\mathcal{S}_1\} =-\phi \phi^\ast \frac{\p \mathcal{S}_1}{\p c^\ast} -c\phi  \frac{\p \mathcal{S}_1}{\p \phi}+c \phi^\ast \frac{\p \mathcal{S}_1}{\p \phi^\ast}
        = \Delta \mathcal{S}_0 =-c\Rightarrow\S_1=\ln\phi.   
    \ee

    The solution $\S_1=\ln\phi$ is not polynomial, so the CME solution $\mathcal{S}_0$ cannot be extended to the QME solution as a polynomial.   
\end{Example}

\section{Classical BV pushforward}

\subsection{Induced action via BV extremization}

Motivated by Proposition \ref{prop_saddle_pt_bv_integral}, we introduce a definition. In this section and beyond, we discuss only the CME, so we drop the subscript $0$ we used to indicate the leading part of the QME solution. 
\begin{Definition}
    The induced classical BV-action ${\bf S}$ is the extremization of the classical BV-action $\mathcal{S}$ on $\bf{M}\times \mathcal{M}$ along the Lagrangian submanifold $\mathcal{L}\subset \mathcal{M}$. For multiple critical points or a critical submanifold, we have a set of induced classical BV-actions, one for each critical point, isolated or not. 
\end{Definition}

We choose global Darboux coordinates $\Phi^{a}, \Phi^\ast_{a}$ on $\bf{M}$. The tubular neighborhood theorem (Thm 4.4.6 in \cite{mnev2019quantum}) allows us to choose local Darboux coordinates $\vp^{\a},\vp^\ast_{\a}$ in a tubular neighborhood $\mathcal{U}\subset \mathcal{M}$ of $\mathcal{L}$, such that the Lagrangian submanifold $\mathcal{L} = \{\vp^\ast_\a =0\}$. The CME on the total space in these coordinates
\be\label{eq_BV_brack_Darboux_coord}
     \frac12 \{\mathcal{S},\mathcal{S}\}= (-1)^{|\Phi^a|}\frac{\p \mathcal{S}}{\p \Phi^a} \frac{\p \mathcal{S}}{\p \Phi_{a}^\ast}+(-1)^{|\varphi^\alpha|}\frac{\p\mathcal{S}}{\p\varphi^\alpha}\frac{\p\mathcal{S}}{\p\varphi^*_\alpha}=0.    
\ee
\begin{Remark}
    In (\ref{eq_BV_brack_Darboux_coord}) we used sign conventions from our earlier paper \cite{Losev:2023gsq}. 
\end{Remark}

In these coordinates an extremization of $\mathcal{S}(\Phi, \Phi^\ast, \vp, \vp^\ast)$ along $\mathcal{L}$ is just an extremization of $\mathcal{S}_{\mathcal{L}}(\Phi,\Phi^{\ast},\vp) = \mathcal{S}(\Phi, \Phi^\ast, \vp, 0)$ in $\vp$-variables. The critical point equations 
\be\label{eq_crit_point_equation}
    \frac{\p \mathcal{S}_{\L}}{\p \vp^\a}\Big|_{\vp_{cl}} =\frac{\p \mathcal{S}}{\p \vp^\a}\Big|_{\L,\vp_{cl}} = 0.
\ee

We denote by $\vp_{cl}=\vp_{cl}(\Phi, \Phi^\ast)$ solutions to the critical point equations (\ref{eq_crit_point_equation}). The classical BV pushforward is a function (collection of functions for multiple critical points) on ${\bf M}$ 
\be
    {\bf S}(\Phi, \Phi^\ast)=\mathcal{S}_{\L}(\Phi, \Phi^\ast, \vp_{cl})=\mathcal{S}(\Phi,\Phi^\ast, \vp_{cl}, 0).
\ee
\begin{Lemma}\label{3.2}
    For each critical point $\vp_{cl}$ the corresponding critical value ${\bf S}$ obeys
    \be
        \frac{\p {\bf S}}{\p \Phi^a} =\frac{\p \mathcal{S}}{\p \Phi^a}\Big|_{\L,\vp_{cl}},\;\;\;  \frac{\p {\bf S}}{\p \Phi^\ast_a} =\frac{\p \mathcal{S}}{\p \Phi^\ast_a}\Big|_{\L,\vp_{cl}},    
    \ee
\end{Lemma}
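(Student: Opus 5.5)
This is the standard envelope-theorem argument, done with the chain rule in graded variables. Fix one critical point and regard $\vp_{cl}=\vp_{cl}(\Phi,\Phi^\ast)$ as a function on ${\bf M}$, at least locally. We may assume it is differentiable, for instance when the critical point is nondegenerate and the implicit function theorem applies; along a smooth family of critical points on a critical submanifold, any smooth choice of section works equally well. By definition
\be
{\bf S}(\Phi,\Phi^\ast)=\mathcal{S}_{\L}\bigl(\Phi,\Phi^\ast,\vp_{cl}(\Phi,\Phi^\ast)\bigr),
\ee
so the plan is to differentiate this composite with respect to $\Phi^a$ and to $\Phi^\ast_a$ and show that the terms coming from the $\Phi$-dependence of $\vp_{cl}$ drop out.

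The chain rule for left derivatives of functions of graded variables gives
\be
\frac{\p {\bf S}}{\p \Phi^a}=\frac{\p \mathcal{S}_{\L}}{\p \Phi^a}\Big|_{\vp_{cl}}+\frac{\p \vp_{cl}^\a}{\p \Phi^a}\,\frac{\p \mathcal{S}_{\L}}{\p \vp^\a}\Big|_{\vp_{cl}},
\ee
and the analogous formula holds with $\Phi^a$ replaced by $\Phi^\ast_a$. The order of factors in the second term is chosen so that no sign appears for left derivatives. Even with a different ordering, only a sign would change, which is irrelevant here.

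By the critical point equations (\ref{eq_crit_point_equation}), the factor $\p\mathcal{S}_{\L}/\p\vp^\a|_{\vp_{cl}}$ vanishes, so the second term is zero. It remains to identify the first term with the right-hand side of the Lemma. Since $\mathcal{S}_{\L}(\Phi,\Phi^\ast,\vp)=\mathcal{S}(\Phi,\Phi^\ast,\vp,0)$ and the restriction to $\L=\{\vp^\ast_\a=0\}$ does not involve the ${\bf M}$-coordinates, differentiation in $\Phi^a$ or $\Phi^\ast_a$ commutes with setting $\vp^\ast=0$. Hence
\be
\frac{\p \mathcal{S}_{\L}}{\p \Phi^a}\Big|_{\vp_{cl}}=\frac{\p \mathcal{S}}{\p \Phi^a}\Big|_{\L,\vp_{cl}},\qquad
\frac{\p \mathcal{S}_{\L}}{\p \Phi^\ast_a}\Big|_{\vp_{cl}}=\frac{\p \mathcal{S}}{\p \Phi^\ast_a}\Big|_{\L,\vp_{cl}},
\ee
which gives both identities of the Lemma.

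The only point that needs care is the differentiability of $\vp_{cl}$. In the isolated, nondegenerate case it follows from the implicit function theorem applied to (\ref{eq_crit_point_equation}). In the formal or perturbative setting, where the $\vp$-Hessian is invertible at the base point, $\vp_{cl}$ can be constructed order by order as a formal power series, so the chain rule still applies.
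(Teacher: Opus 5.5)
Your proposal is correct and matches the paper's proof: you apply the chain rule to $\mathcal{S}_{\L}(\Phi,\Phi^\ast,\vp_{cl}(\Phi,\Phi^\ast))$, kill the $\p\vp_{cl}/\p\Phi$ term with the critical point equation, and commute the $\Phi$- and $\Phi^\ast$-derivatives with setting $\vp^\ast=0$. Your added remarks on the differentiability of $\vp_{cl}$ (via the implicit function theorem or an order-by-order construction) make explicit an assumption the paper uses without stating it.
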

\begin{proof}
    By definition 
    \be
        {\bf S}=\mathcal{S}_{\L}\Big|_{\vp_{cl}} = \mathcal{S}\Big|_{\L,\vp_{cl}}.    
    \ee
    We use the chain rule, the critical point equation (\ref{eq_crit_point_equation}), and the explicit form of $\mathcal{L}$ to derive 
    \be
        \frac{\p {\bf S}}{\p \Phi^a} = \frac{\p \mathcal{S}_{\L}}{\p \Phi^a}\Big|_{\vp_{cl}}+\frac{\p \vp^\b_{cl}}{\p\Phi^a} \frac{\p \mathcal{S}_{\L}}{\p \vp^\b}\Big|_{\vp_{cl}}= \frac{\p \mathcal{S}_{\L}}{\p \Phi^a}\Big|_{\vp_{cl}}=\frac{\p \mathcal{S}}{\p \Phi^a}\Big|_{\L,\vp_{cl}}.    
    \ee
    The relation for $\Phi^\ast$ is derived in a similar manner.

\end{proof}

\subsection{First main theorem}

\begin{Theorem}
    For every critical point $\vp_{cl}$, the induced classical BV-action solves the CME on ${\bf M}$.
\end{Theorem}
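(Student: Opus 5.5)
By Lemma \ref{3.2}, the derivatives of ${\bf S}$ with respect to $\Phi^a$ and $\Phi^\ast_a$ coincide with the corresponding derivatives of $\mathcal{S}$ restricted to $\mathcal{L}$ at $\vp_{cl}$. The plan is to express the CME on ${\bf M}$ entirely through such derivatives and then use the CME (\ref{eq_BV_brack_Darboux_coord}) on the total space. The $\vp$-sector term of that equation will be shown to vanish because of the critical point equation (\ref{eq_crit_point_equation}).

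First, write the bracket on ${\bf M}$ in Darboux coordinates and apply Lemma \ref{3.2}:
\be
\frac12\{{\bf S},{\bf S}\}_{\bf M} = (-1)^{|\Phi^a|}\frac{\p {\bf S}}{\p \Phi^a}\frac{\p {\bf S}}{\p \Phi^\ast_a} = (-1)^{|\Phi^a|}\frac{\p \mathcal{S}}{\p \Phi^a}\frac{\p \mathcal{S}}{\p \Phi^\ast_a}\Big|_{\L,\vp_{cl}}.
\ee
Second, the CME (\ref{eq_BV_brack_Darboux_coord}) holds identically on the tubular neighborhood $\mathcal{U}$. We may therefore restrict it to $\vp^\ast=0$ and then evaluate it at $\vp=\vp_{cl}(\Phi,\Phi^\ast)$. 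Restriction and evaluation commute with multiplication, so we obtain
\be
(-1)^{|\Phi^a|}\frac{\p \mathcal{S}}{\p \Phi^a}\frac{\p \mathcal{S}}{\p \Phi^\ast_a}\Big|_{\L,\vp_{cl}} = -(-1)^{|\vp^\a|}\frac{\p \mathcal{S}}{\p \vp^\a}\Big|_{\L,\vp_{cl}}\,\frac{\p \mathcal{S}}{\p \vp^\ast_\a}\Big|_{\L,\vp_{cl}}.
\ee
Third, the derivative $\p\mathcal{S}/\p\vp^\a$ is taken along $\mathcal{L}$, since it does not involve $\vp^\ast$. Hence it commutes with restriction to $\vp^\ast=0$ and equals $\p \mathcal{S}_{\L}/\p\vp^\a$. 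By (\ref{eq_crit_point_equation}) this vanishes at $\vp_{cl}$. The right-hand side is therefore zero, and so $\{{\bf S},{\bf S}\}=0$.

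The main point needing care is the legitimacy of these restrictions. The CME is an identity of functions on $\mathcal{U}$, so its restriction to $\mathcal{L}$ and pullback along the section $\vp=\vp_{cl}(\Phi,\Phi^\ast)$ are valid. We must, however, check that $\vp_{cl}$ is at least a smooth (or formal) function of $(\Phi,\Phi^\ast)$, so that Lemma \ref{3.2} applies. For a critical submanifold, the same argument works pointwise along any smooth family of critical points. The factor $\p\mathcal{S}/\p\vp^\ast_\a$ need not vanish, but this does not matter, since it is multiplied by the vanishing factor. Finally, I would double-check the sign and ordering conventions for odd variables in (\ref{eq_BV_brack_Darboux_coord}) and in the chain rule of Lemma \ref{3.2}; the factor ordering is preserved throughout, so no extra signs should appear.
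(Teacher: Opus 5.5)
Your proposal is correct and follows the paper's proof essentially line by line. You replace the derivatives of ${\bf S}$ via Lemma \ref{3.2}, rewrite the $\Phi$-sector product as $\frac12\{\mathcal{S},\mathcal{S}\}$ minus the $\vp$-sector term (restricted to $\L$ at $\vp_{cl}$), and kill these using the total-space CME and the critical point equation (\ref{eq_crit_point_equation}) respectively; your remarks on the smoothness of $\vp_{cl}(\Phi,\Phi^\ast)$ and on $\p_{\vp^\a}$ commuting with restriction to $\vp^\ast=0$ are points the paper leaves implicit.
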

\begin{proof}

    The CME equation on ${\bf M}$ for induced action simplifies using Lemma \ref{3.2}
    \be
    \begin{split}
        \frac12 \{{\bf S},{\bf S}\}_{\bf M} &= (-1)^{|\Phi^a|}\frac{\p {\bf S}}{\p \Phi^a} \frac{\p {\bf S}}{\p \Phi_{a}^\ast}        
        =(-1)^{|\Phi^a|} \frac{\p \mathcal{S}}{\p \Phi^a}\frac{\p \mathcal{S}}{\p \Phi_a^\ast} \Big|_{\L,\vp_{cl}} \\
        &=\frac12 \{\mathcal{S},\mathcal{S}\}\Big|_{\L,\vp_{cl}} - (-1)^{|\vp^\a|} \frac{\p \mathcal{S}}{\p \vp^\a}\frac{\p \mathcal{S}}{\p \vp_\a^\ast}  \Big|_{\L,\vp_{cl}} =0. 
    \end{split}
    \ee
    The first term on the second line vanishes due to CME on the total space. The second term vanishes due to the critical-point equation (\ref{eq_crit_point_equation}). 
    
\end{proof}
\begin{Remark}
    Our proof implies that if we relax the CME condition for $\mathcal{S}$, we will get a relation
    \be
        \{{\bf S},{\bf S}\}_{\M} =\{\mathcal{S},\mathcal{S}\}\Big|_{\L,\vp_{cl}}. 
    \ee
\end{Remark}

\subsection{Second main theorem}

\begin{Lemma}\label{lemma for thm2} For a non-degenerate critical point $\vp_{cl}$
    \be
        \{ {\bf S},\vp_{cl}^\a\}_{\M}-(-1)^{|\varphi^\a|}\p_{\vp^\ast_\a} \mathcal{S}\Big|_{\vp_{cl},\L} =0.
    \ee
\end{Lemma}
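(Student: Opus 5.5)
Since $\vp_{cl}$ is a non-degenerate critical point, the Hessian
\be
H_{\a\b}=\frac{\p^2\mathcal{S}}{\p\vp^\a\p\vp^\b}\Big|_{\L,\vp_{cl}}
\ee
is invertible, and we may differentiate the critical point equation (\ref{eq_crit_point_equation}) implicitly. We get $\p_\Phi\vp_{cl}=-H^{-1}\,\p_\Phi\p_\vp\mathcal{S}|_{\L,\vp_{cl}}$, and similarly for $\p_{\Phi^\ast}\vp_{cl}$, with the appropriate ordering and signs for odd variables. Using the Darboux form of the bracket on ${\bf M}$ together with Lemma \ref{3.2}, the first term becomes
\be
\{{\bf S},\vp^\a_{cl}\}_{\M}\sim \frac{\p\mathcal{S}}{\p\Phi^a}\frac{\p\vp^\a_{cl}}{\p\Phi^\ast_a}\pm\frac{\p\mathcal{S}}{\p\Phi^\ast_a}\frac{\p\vp^\a_{cl}}{\p\Phi^a}.
\ee
Substituting the implicit derivatives, this is $-(H^{-1})^{\a\b}$ contracted with
\be
X_\b:=\frac{\p}{\p\vp^\b}\left(\frac12\{\mathcal{S},\mathcal{S}\}_{\bf M\text{-part}}\right)\Big|_{\L,\vp_{cl}},
\ee
that is, with the $\vp$-derivative of the ${\bf M}$-part of the bracket. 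The chain rule produces exactly the two cross terms $\p_\Phi\mathcal{S}\,\p_\vp\p_{\Phi^\ast}\mathcal{S}$ and $\p_{\Phi^\ast}\mathcal{S}\,\p_\vp\p_\Phi\mathcal{S}$.

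Next we differentiate the CME (\ref{eq_BV_brack_Darboux_coord}) on the total space with respect to $\vp^\b$ and then restrict to $\L$ at $\vp_{cl}$. This gives
\be
X_\b+\frac{\p}{\p\vp^\b}\left((-1)^{|\vp^\g|}\frac{\p\mathcal{S}}{\p\vp^\g}\frac{\p\mathcal{S}}{\p\vp^\ast_\g}\right)\Big|_{\L,\vp_{cl}}=0.
\ee
When we expand the derivative in the second term, the piece $\p_\vp\mathcal{S}\,\p_\vp\p_{\vp^\ast}\mathcal{S}$ vanishes by the critical point equation. What remains is $\pm H_{\b\g}\,\p_{\vp^\ast_\g}\mathcal{S}|_{\L,\vp_{cl}}$, so $X_\b=\mp H_{\b\g}\,\p_{\vp^\ast_\g}\mathcal{S}$.

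Contracting with $H^{-1}$ then gives $\{{\bf S},\vp^\a_{cl}\}_{\M}=(-1)^{|\vp^\a|}\p_{\vp^\ast_\a}\mathcal{S}|_{\L,\vp_{cl}}$, which is the claim. The structure mirrors the proof of the first main theorem: there the CME itself was restricted to $\L$, and here its $\vp$-derivative is restricted.

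The main obstacle is sign bookkeeping in the super setting: the parity of $\vp^\a$, left versus right derivatives in the implicit differentiation, the position of indices on $H$, and the parity of $H$ itself. I would fix conventions by using left derivatives throughout. I would first check the purely even case, where every sign disappears and the identity reduces to $H^{-1}$ contracted with the differentiated CME. Only then would I track the Koszul signs, confirming that the overall factor comes out as exactly $(-1)^{|\vp^\a|}$ and consistent with the sign convention in (\ref{eq_BV_brack_Darboux_coord}).
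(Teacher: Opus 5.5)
Your proposal is correct and follows the paper's proof: you differentiate the critical-point equation, rewrite $\{{\bf S},\vp^\a_{cl}\}_{\M}$ via Lemma~\ref{3.2}, identify the $\vp$-derivative of the ${\bf M}$-part of the bracket, and use the $\vp$-differentiated CME together with the critical-point equation to reduce it to the Hessian times $\p_{\vp^\ast}\mathcal{S}|_{\vp_{cl},\L}$. The only difference is cosmetic: you invert the Hessian at the start, whereas the paper multiplies the bracket by the Hessian on the right, substitutes, and cancels it only at the end (avoiding $H^{-1}$ for a supermatrix), and the paper also carries out the Koszul signs you deferred, which do give exactly $(-1)^{|\vp^\a|}$.
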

\begin{proof}
We use $\Big|$ as the abbreviation for $\Big|_{\varphi_{cl},\mathcal{L}}$. The derivative of the critical point equation
\be\label{eq_derivative_crit_pt_equation}
    \frac{\p\mathcal{S}}{\p\varphi^\a}\Big|=0\Rightarrow\frac{\p}{\p\Phi^a}\left(\frac{\p\mathcal{S}}{\p\varphi^\a}\Big|\right)=\frac{\p^2\mathcal{S}}{\p\Phi^a\p\varphi^\a}\Big|+\frac{\p\varphi_{cl}^\b}{\p\Phi^a} \frac{\p^2\mathcal{S}}{\p\varphi^\b\p\varphi^\a}\Big|=0.
\ee
We use Lemma \ref{3.2} to simplify
\be\label{eq_derivatives_simplified}
    \{{\bf S},\varphi_{cl}^\alpha\}_\M=(-1)^{|\Phi^a|}\left(\frac{\p{\bf S}}{\p\Phi^a}\frac{\p\varphi^\alpha_{cl}}{\p\Phi^*_a}+\frac{\p{\bf S}}{\p\Phi^*_a}\frac{\p\varphi^a_{cl}}{\p\Phi^a}\right)=(-1)^{|\Phi^a|}\left(\frac{\p\mathcal{S}}{\p\Phi^a}\Big|\frac{\p\varphi^\alpha_{cl}}{\p\Phi^*_a}+\frac{\p\mathcal{S}}{\p\Phi^*_a}\Big|\frac{\p\varphi^\alpha_{cl}}{\p\Phi^a}\right).
\ee
We multiply (\ref{eq_derivatives_simplified}) by a matrix of second derivatives at a critical point and use (\ref{eq_derivative_crit_pt_equation}) and its version for $\Phi^\ast$-derivative to remove the second derivative matrix
\be\nn
\begin{split}
    \{&{\bf S},\varphi_{cl}^\alpha\}_\M\cdot\frac{\p^2\mathcal{S}}{\p\varphi^\alpha\p\varphi^\beta}\Big|=(-1)^{|\Phi^a|+1}\left(\frac{\p\mathcal{S}}{\p\Phi^a}\Big|\frac{\p^2\mathcal{S}}{\p\Phi^*_a\p\varphi^\beta}\Big|+\frac{\p\mathcal{S}}{\p\Phi^*_a}\Big|\frac{\p^2\mathcal{S}}{\p\Phi^a\p\varphi^\beta}\Big|\right)\\
    &=(-1)^{|\vp^\beta|+1}\frac{\p}{\p\varphi^\beta}\left((-1)^{|\Phi^a|}\frac{\p\mathcal{S}}{\p\Phi^a}\frac{\p\mathcal{S}}{\p\Phi^*_a}\right)\Big|
    =(-1)^{|\vp^\beta|+1}\frac{\p}{\p\varphi^\beta}\left(\frac{1}{2}\{\mathcal{S},\mathcal{S}\}-(-1)^{|\varphi^\alpha|}\frac{\p\mathcal{S}}{\p\varphi^\alpha}\frac{\p\mathcal{S}}{\p\varphi^*_\alpha}\right)\Big|\\
    &=(-1)^{|\vp^\beta|+|\varphi^\alpha|}\frac{\p}{\p\varphi^\beta}\left(\frac{\p\mathcal{S}}{\p\varphi^\alpha}\frac{\p\mathcal{S}}{\p\varphi^*_\alpha}\right)\Big|=(-1)^{|\varphi^\alpha|+|\vp^\beta|}\left(\frac{\p^2\mathcal{S}}{\p\varphi^\beta\p\varphi^\alpha}\Big|\frac{\p\mathcal{S}}{\p\varphi_\alpha^*}\Big|\right)\\
    &=(-1)^{|\varphi^\alpha|}\left(\frac{\p\mathcal{S}}{\p\varphi_\alpha^*}\Big|\frac{\p^2\mathcal{S}}{\p\varphi^\alpha\p\varphi^\beta}\Big|\right).
\end{split}
\ee
We used the Leibniz formula in the second equality, the BV bracket on the total space in the third, the CME on the total space in the fourth, the critical point equation (\ref{eq_crit_point_equation}) in the fifth, and rearranged terms and derivatives in the last.

The last relation implies the Lemma's statement for non-degenerate critical points, when the matrix of second derivatives is of maximal rank
\be 
    rk\;\frac{\p^2\S}{\p\vp^\alpha\p\vp^\beta}\Big|_{\varphi_{cl},\mathcal{L}} = max.
\ee

\end{proof}

\begin{Example}
    The Lemma \ref{lemma for thm2} may fail for non-isolated critical points. We consider ${\bf M} = \Pi T^\ast \mathbb{R}$ and $\mathcal{M}=\Pi T^\ast \mathbb{R}^{1|1}$ with $\varphi^1$ even and $\varphi^2$ odd. The CME solution on the total space
    \be\nn
        \mathcal{S}=\frac{1}{2}(\varphi^1)^2+\varphi^1+\Phi\varphi^1\varphi^*_2.
    \ee
    The restriction of the CME solution to Lagrangian submanifold $\L=\{\varphi^*_1=0,\varphi^*_2=0\}$ is
    \be\nn
        \mathcal{S}_{\mathcal{L}}=\frac{1}{2}(\varphi^1)^2+\varphi^1.
    \ee
    The critical point equations are
    $$\begin{aligned}
        &\frac{\p\mathcal{S}_{\mathcal{L}}}{\p\varphi^1}=\varphi^1+1=0\Rightarrow \varphi^1_{cl}=-1,\\
        &\frac{\p\mathcal{S}_{\mathcal{L}}}{\p\varphi^2}=0\Rightarrow \varphi^2_{cl} \in \mathbb{R}^{0|1}.
    \end{aligned}$$
    Hence we have a critical submanifold $\mathbb{R}^{0|1}$ parametrized by values of $\varphi^2_{cl}$. For each critical point, the induced action is the same
    \be\nn
        {\bf S}=-\frac{1}{2}.
    \ee
    One of the statements in Lemma \ref{lemma for thm2} is violated. Namely 
    $$\begin{aligned}
        &\{{\bf S},\varphi^1_{cl}\}_{\M}-(-1)^{|\vp^1|}\p_{\varphi^*_1}\mathcal{S}|_{\varphi_{cl},\mathcal{L}}=\{-\frac{1}{2},-1\}_\M=0,\\
        &\{{\bf S},\varphi^2_{cl}\}_{\M}-(-1)^{|\vp^2|}\p_{\varphi^*_2}\mathcal{S}|_{\varphi_{cl},\mathcal{L}}=\{-\frac{1}{2},0\}_\M+\Phi\cdot\varphi^1|_{\varphi_{cl},\mathcal{L}}=-\Phi\neq 0.
    \end{aligned}$$
  
\end{Example}

\begin{Theorem}\label{thm_BV_induction_canonical_transform}
    The classical BV pushforward for an isolated critical point maps a BV canonically equivalent pair to a BV canonically equivalent pair.
\end{Theorem}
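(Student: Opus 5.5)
We take a family $\mathcal{S}_t,\mathcal{R}_t$ on ${\bf M}\times\mathcal{M}$ satisfying (\ref{def_cl_BV_canonical_diff}), with each $\mathcal{S}_t$ solving the CME, and follow one isolated (non-degenerate) critical point $\vp_{cl}(t)$ of $\mathcal{S}_{t,\L}$. Non-degeneracy of the Hessian $\p^2\mathcal{S}_t/\p\vp^\a\p\vp^\b|$ lets us continue $\vp_{cl}(t)=\vp_{cl}(\Phi,\Phi^\ast,t)$ smoothly in $t$ by the implicit function theorem. This gives a family ${\bf S}_t=\mathcal{S}_t|_{\L,\vp_{cl}(t)}$ on ${\bf M}$. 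By the first main theorem each ${\bf S}_t$ solves the CME. It then suffices to produce an explicit odd generator ${\bf R}_t$ on ${\bf M}$ with $\frac{d}{dt}{\bf S}_t=\{{\bf S}_t,{\bf R}_t\}_{\M}$. The natural candidate is ${\bf R}_t=\mathcal{R}_t|_{\L,\vp_{cl}(t)}$, possibly up to a sign convention.

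First we compute $\frac{d}{dt}{\bf S}_t$. By the chain rule, the term $\dot\vp^\a_{cl}\,\p_{\vp^\a}\mathcal{S}_t|$ vanishes by the critical point equation (\ref{eq_crit_point_equation}), exactly as in Lemma \ref{3.2}. Hence
\be\nn
\frac{d}{dt}{\bf S}_t=\{\mathcal{S}_t,\mathcal{R}_t\}\Big|_{\L,\vp_{cl}}.
\ee
Next we split the total bracket in Darboux coordinates into the $(\Phi,\Phi^\ast)$ part and the $(\vp,\vp^\ast)$ part. In the $(\vp,\vp^\ast)$ part, the terms containing $\p_{\vp^\a}\mathcal{S}_t|$ vanish. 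What remains is a term of the form $\pm\,\p_{\vp^\ast_\a}\mathcal{S}_t|\,\p_{\vp^\a}\mathcal{R}_t|$.

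We then compute $\{{\bf S}_t,{\bf R}_t\}_{\M}$. The chain rule gives $\p_{\Phi}{\bf R}_t=\p_\Phi\mathcal{R}_t|+\p_\Phi\vp^\b_{cl}\,\p_{\vp^\b}\mathcal{R}_t|$, and similarly for $\p_{\Phi^\ast}$. Using Lemma \ref{3.2} for the derivatives of ${\bf S}_t$, the bracket becomes the $(\Phi,\Phi^\ast)$ part of $\{\mathcal{S}_t,\mathcal{R}_t\}|$ plus a correction of the form $\{{\bf S}_t,\vp^\b_{cl}\}_{\M}\,\p_{\vp^\b}\mathcal{R}_t|$. By Lemma \ref{lemma for thm2}, which needs exactly the non-degeneracy assumption, $\{{\bf S}_t,\vp^\b_{cl}\}_{\M}=(-1)^{|\vp^\b|}\p_{\vp^\ast_\b}\mathcal{S}_t|$. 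This correction therefore reproduces the surviving $(\vp,\vp^\ast)$ term of the total bracket, and we conclude $\frac{d}{dt}{\bf S}_t=\{{\bf S}_t,{\bf R}_t\}_{\M}$. Since ${\bf S}_0$ and ${\bf S}_1$ are the pushforwards of $\mathcal{S}_0$ and $\mathcal{S}_1$ at the corresponding critical points, this proves the theorem and gives the explicit relation ${\bf R}_t=\mathcal{R}_t|_{\L,\vp_{cl}}$ between the generators.

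The main obstacle is the sign bookkeeping in the last matching step. We must check that the graded signs from the Leibniz rule, the ordering $\{{\bf S},\vp_{cl}\}$ versus $\{\vp_{cl},{\bf S}\}$, and the factor $(-1)^{|\vp^\a|}$ in (\ref{eq_BV_brack_Darboux_coord}) combine so that the correction term and the residual $(\vp,\vp^\ast)$ term agree rather than cancel. If they disagree, we would absorb the mismatch into a sign or reordering in the definition of ${\bf R}_t$.

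A secondary point is global in $t$. The critical point must stay isolated and non-degenerate along the whole path, so that the continuation $\vp_{cl}(t)$ exists. We would take this as part of the hypothesis of an ``isolated critical point'' for the family.
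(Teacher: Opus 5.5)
Your proposal is correct and is essentially the paper's proof. The paper phrases it as an infinitesimal step $\mathcal{S}'=\mathcal{S}+\e\{\mathcal{S},\mathcal{R}\}$ rather than differentiating along the family in $t$, but it likewise drops the critical-point shift via (\ref{eq_crit_point_equation}), sets ${\bf R}=\mathcal{R}|_{\vp_{cl},\L}$, and combines the chain rule, Lemma \ref{3.2} and Lemma \ref{lemma for thm2} to get $\{\mathcal{S},\mathcal{R}\}|_{\vp_{cl},\L}=\{{\bf S},{\bf R}\}_{\M}$.

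The sign check you flagged goes through with no adjustment: with left derivatives the chain-rule correction is exactly $+\{{\bf S},\vp^\b_{cl}\}_{\M}\,\p_{\vp^\b}\mathcal{R}|_{\vp_{cl},\L}$, and Lemma \ref{lemma for thm2} turns this into the surviving term $(-1)^{|\vp^\b|}\p_{\vp^\ast_\b}\mathcal{S}|_{\vp_{cl},\L}\,\p_{\vp^\b}\mathcal{R}|_{\vp_{cl},\L}$. Your fallback, absorbing a mismatch into a sign change of ${\bf R}_t$, would not have worked, since it flips the already-matching $(\Phi,\Phi^\ast)$ part too; it is simply not needed.
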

\begin{proof}
    We can perform a finite BV canonical transformation as a sequence of infinitesimal transformations. Hence, it is sufficient to prove the theorem for an infinitesimal BV canonical transformation. We perform an infinitesimal canonical transformation on the total space with generator $\mathcal{R}$ 
    \be
    \begin{split}
        \mathcal{S}'& = \mathcal{S}+\e \{\mathcal{S},\mathcal{R}\} \\
        &=\mathcal{S}+\e(-1)^{|\Phi^a|}\left(\p_{\Phi^a} \mathcal{S}\p_{\Phi^\ast_a}\mathcal{R}+\p_{\Phi^\ast_a} \mathcal{S}\p_{\Phi_a}\mathcal{R} \right) +\e(-1)^{|\varphi^\alpha|}\left(\p_{\vp^\alpha} \mathcal{S}\p_{\vp^\ast_\a}\mathcal{R}+\p_{\vp^\ast_\a} \mathcal{S}\p_{\vp_\a}\mathcal{R} \right).
    \end{split} 
    \ee
    The restriction of the action to $\L=\{\vp^\ast = 0\}$
    \be
        \mathcal{S}'_{\L} = \mathcal{S}_{\L}+\e \{\mathcal{S},\mathcal{R}\}\Big|_{\L}.
    \ee
   
    Given a critical point $\vp_{cl}$ in the undeformed case, we expect the critical point $\hat{\vp}_{cl}=\vp_{cl}+\e \delta \vp$. The induced action at critical point $\hat{\vp}_{cl}$
    \be\label{eq_two_induced_actions}
    \begin{split}
        {\bf S}'&=\mathcal{S}'_{\L}(\Phi, \Phi^\ast, \vp_{cl}+\e\delta\vp)=\mathcal{S}_{\L}(\Phi, \Phi^\ast, \vp_{cl}+\e\delta\vp)+\e \{\mathcal{S},\mathcal{R}\}\Big|_{\vp_{cl},\L}+\cO(\e^2)\\
        &=\mathcal{S}_{\L}(\Phi, \Phi^\ast, \vp_{cl})+\e \delta \vp^\a\; \p_{\vp^\a}\mathcal{S}_{\L}\Big|_{\vp_{cl}} +\e \{\mathcal{S},\mathcal{R}\}\Big|_{\vp_{cl},\L}+\cO(\e^2)\\
        &={\bf S} +\e \{\mathcal{S},\mathcal{R}\}\Big|_{\vp_{cl},\L}+\cO(\e^2).
    \end{split}    
    \ee
    We use $\Big|$ as the abbreviation for $\Big|_{\varphi_{cl},\mathcal{L}}$. We introduce 
    \be
        {\bf R} =\mathcal{R}(\Phi, \Phi^\ast, \vp_{cl})\Big|_{\mathcal{L}}.
    \ee
    It obeys the relation, similar to Lemma \ref{3.2} (and a similar relation for $\Phi^\ast$-derivative) 
    \be
    \begin{split}
        \p_{\Phi^a} {\bf R}&= \p_{\Phi^a}\mathcal{R}\Big| + \frac{\p \vp_{cl}^\b}{\p\Phi^a}  \p_{\vp^\b}\mathcal{R}\Big|.
    \end{split}    
    \ee
    An infinitesimal canonical transformation on ${\bf M}$
    \be
        \e \{{\bf S}, {\bf R}\}_{\M} = \e(-1)^{|\Phi^a|}\left(\p_{\Phi^a} {\bf S}\p_{\Phi^\ast_a}{\bf R}+\p_{\Phi^\ast_a} {\bf S}\p_{\Phi_a}{\bf R} \right).     
    \ee
     For our special choice of coordinates, we evaluate
    \be\nn
    \begin{split}
        &\e\{\mathcal{S},\mathcal{R}\}\Big|=\e(-1)^{|\Phi^a|}\left(\p_{\Phi^a} \mathcal{S}\p_{\Phi^\ast_a}\mathcal{R}+\p_{\Phi^\ast_a} \mathcal{S}\p_{\Phi^a}\mathcal{R} \right)\Big| +\e(-1)^{|\vp^\a|}\left(\p_{\vp^\a} \mathcal{S}\p_{\vp^\ast_\a}\mathcal{R}+ \p_{\vp^\ast_\a} \mathcal{S}\p_{\vp^\a}\mathcal{R} \right)\Big|\\
         &=\e(-1)^{|\Phi^a|}\p_{\Phi^a} {\bf S}\left(\p_{\Phi^\ast_a}{\bf R}-\frac{\p \vp_{cl}^\b}{\p\Phi_a^\ast}  \p_{\vp^\b}\mathcal{R}\Big|\right)\\
         &\qquad\qquad+\e(-1)^{|\Phi^a|}\p_{\Phi^\ast_a} {\bf S}\left(\p_{\Phi^a}{\bf R}-\frac{\p \vp_{cl}^\b}{\p\Phi^a}  \p_{\vp^\b}\mathcal{R}\Big|\right)+\e(-1)^{|\vp^\a|}\p_{\vp^\ast_\a} \mathcal{S}\p_{\vp^\a}\mathcal{R} \Big|\\
         &=\e\{{\bf S}, {\bf R}\}_{\M} -\e(-1)^{|\Phi^a|}\left(\p_{\Phi^a} {\bf S}\frac{\p \vp_{cl}^\b}{\p\Phi^\ast_a}  \p_{\vp^\b}\mathcal{R}\Big|+\p_{\Phi^\ast_a} {\bf S}\frac{\p \vp_{cl}^\b}{\p\Phi^a}  \p_{\vp^\b}\mathcal{R}\Big|\right)
         +\e(-1)^{|\vp^\a|}\p_{\vp^\ast_\a} \mathcal{S}\p_{\vp^\a}\mathcal{R} \Big|\\
         &=\e \{{\bf S}, {\bf R}\}_{\M} -\e\left(\{ {\bf S},\vp_{cl}^\a\}_{\M}-(-1)^{|\vp^\a|}\p_{\vp^\ast_\a} \mathcal{S} \Big|\right)\p_{\vp^\a}\mathcal{R} \Big|=\e \{{\bf S}, {\bf R}\}_{\M}.
    \end{split}    
    \ee
    In the last equality, we used Lemma \ref{lemma for thm2}. Hence, using equation (\ref{eq_two_induced_actions}) we proved that 
    \be\nn
        {\bf S}' -{\bf S}=\e\{\S,\R\}\Big| +\cO(\e^2)=\e\{{\bf S},{\bf R}\}_{\M}+\cO(\e^2).
    \ee    
    Namely, the classical BV pushforward of a BV canonically equivalent pair $(\S,\S')$ is a BV canonically equivalent pair $(\bf S,\bf S')$ on ${\bf M}$ with the generator $\bf R=\R\Big|$.
\end{proof}

\section{Examples}
\begin{Example}
    We consider $\SO(n+1)$ rotations acting on $\mathbb R^{n+1}$ and we decompose $\mathbb R^{n+1}=\mathbb R^n\times\mathbb R$ with the corresponding linear coordinates $(\Phi^a,\varphi)$.
    The Lie algebra $\so(n+1)$ is also decomposed as a linear space into $\so(n)\oplus\mathbb R^n$ with the corresponding linear coordinates $C^{ab}, B^a$, where $C^{ab}$ is anti-symmetric. This data gives us a CME solution on total space $M=\Pi T^*(\mathbb{R}^{n+1}\times\Pi\so(n+1))$ in Darboux coordinates
    $$\mathcal{S}=\Phi^a\Phi_a+\varphi^2+Q+F,$$
    $$Q=C^{ab}\Phi_a\Phi^*_b+B^a\varphi\Phi^*_a-B^a\Phi_a\varphi^*,$$
    $$F=\delta_{bc}C^{ab}C^{cd}C^*_{ad}+\delta_{ac}C^{bc}B^aB^*_b-B^aB^bC^*_{ab}.$$
    We raise and lower indices using the $\delta_{ab}$. We define $\mathcal{M} = \Pi T^\ast \mathbb{R}$ with Darboux coordinates $\vp, \vp^\ast$. The restriction of BV action to $\L=\{\varphi^*=0\}$ gives us
    $$\mathcal{S}_\L=\Phi^a\Phi_a+\varphi^2+C^{ab}\Phi_a\Phi^*_b+B^a\varphi\Phi^*_a+F.$$
    Extremization with respect to $\varphi$ gives us a single critical point
    $$\varphi_{cl}=-\frac{1}{2}B^a\Phi^*_a.$$
    The induced action 
    $${\bf S}=\Phi^a\Phi_a+C^{ab}\Phi_a\Phi^*_b+F+\frac{1}{4}B^aB^b\Phi^*_a\Phi^*_b $$
    is a CME solution on ${\bf M}=\Pi T^*(\mathbb R^n\times\Pi \so(n+1))$ describing $\SO(n+1)$ action on $\mathbb R^n$ which preserves the classical action $\Phi^a\Phi_a$. The quadratic term $\frac{1}{4}B^aB^b\Phi^*_a\Phi^*_b$ (bivector) indicates the on-shell closure of the symmetry algebra. The bivector structures naturally emerge when refining on-shell symmetries, as studied in \cite{Alexandrov:2007pd,Losev:2023gsq}. 
\end{Example}

\begin{Example}
    For the previous example, we consider the BV canonical transformation with generator $\mathcal{R}_t=b^a\vp^*\Phi^*_a$.
    It is equivalent to the following coordinate transformation:
    $$\Phi^a\to\Phi^a-b^a\vp^*,\quad\vp\to\vp+b^a\Phi^*_a.$$
    So the new action is
    $$\S'=\Phi^a\Phi^a+\vp^2+Q'+F-b^ab^b\Phi^*_a\Phi^*_b-B^ab^b\Phi^*_a\Phi^*_b+2\delta_{ac}C^{bc}b^a\Phi^*_b\vp^*,$$
    $$Q'=C^{ab}\Phi_a\Phi^*_b+(B^a+2b^a)\varphi\Phi^*_a-(B^a+2b^a)\Phi_a\varphi^*,$$
    $$F=\delta_{bc}C^{ab}C^{cd}C^*_{ad}+\delta_{ac}C^{bc}B^aB^*_b-B^aB^bC^*_{ab}.$$
    The restriction of BV action to $\L=\{\varphi^*=0\}$ gives us
    $$\mathcal{S}'_\L=\Phi^a\Phi_a+\varphi^2+C^{ab}\Phi_a\Phi^*_b+(B^a+2b^a)\varphi\Phi^*_a+F-b^ab^b\Phi^*_a\Phi^*_b-B^ab^b\Phi^*_a\Phi^*_b.$$
    Extremization with respect to $\varphi$ gives us a single critical point
    $$\varphi'_{cl}=-\frac{1}{2}(B^a+2b^a)\Phi^*_a.$$
    The induced action is
    $${\bf S}'=\Phi^a\Phi_a+C^{ab}\Phi_a\Phi^*_b+F+\frac{1}{4}B^aB^b\Phi^*_a\Phi^*_b.$$
    It is exactly the same with $\bf S$, since the generator of BV canonical transformation ${\bf R}=\R|_{\L,\vp_{cl}}=0$ on ${\bf M}$ is trivial.
\end{Example}

\section{Conclusions and open questions}

In some applications, we perform the functional integral over an infinite-dimensional space of functions to obtain an effective theory with a finite number of degrees of freedom. 
One such example was studied by authors in \cite{LOSEV2026105723}.

In many cases the space (of fields) $\m$ is infinite-dimensional, while the space (of parameters) ${\bf M}$ is finite-dimensional. In the task of quantization, we would like to evaluate the dependence of the quantum partition function on parameters. The standard way is to start with a CME solution on ${\bf M} \times \mathcal{M}$, perform BV quantization of a solution to the QME (very hard to do), and then do a quantum BV pushforward to ${\bf M}$ (also hard because of the infinite-dimensional integral). 

We propose an alternative approach that seems much easier and more tractable. We take a CME solution on $\M\times\m$ and perform classical BV pushforward from $\M\times\m$ to $\M$ to get a CME solution on the space of parameters $\M$. Note that we do not need to do a functional integral; we just need to solve differential equations. To get a QME solution on the space of parameters ${\bf M}$, we perform BV quantization of the CME solution. Since we do BV quantization on a finite-dimensional space, the procedure is much more tractable and easier.

\begin{center}
        \begin{tikzcd}[column sep=10em,row sep=5em,cells={nodes={draw=gray}}]
            \text{QME solution on $\M\times\m$} \ar[r,"\text{BV pushforward}",dashrightarrow]  & \text{QME solution on $\M$}  \\
            \text{CME solution on $\M\times\m$} \ar[r,"\text{Classical BV pushforward}",squiggly] \ar[u,"\text{BV Quantization}",dashrightarrow] & \text{CME solution on $\M$} \ar[u,"\text{BV Quantization}",squiggly,]
        \end{tikzcd}
\end{center}

\section*{Acknowledgments}
The authors are grateful to Alberto Cattaneo and Maxim Zabzine for valuable discussions and to the organizers of the workshop “Higher Structures in Quantum Field Theory” for providing a space for the discussions.

\bibliography{BV_push_ref}{}
\bibliographystyle{utphys}

\end{document}